\DeclareRobustCommand
  \DeclareSymbolFont{stix@largesymbols}{LS2}{stixex}{m}{n}
  \DeclareMathDelimiter{\lBrace}{\mathopen} {stix@largesymbols}{"E8}%
                                            {stix@largesymbols}{"0E}
  \DeclareMathDelimiter{\rBrace}{\mathclose}{stix@largesymbols}{"E9}%
                                            {stix@largesymbols}{"0F}
\newcommand{\multiset}[1]{\lBrace #1 \rBrace}
\newcommand\footnoteref[1]{\protected@xdef\@thefnmark{\ref{#1}}\@footnotemark} %
\newcommand{\saveForCr}[1]{}
\newcommand{\strz}{{\color{black} \boldsymbol{z}}}
\title{On the Intersection of Context-Free and Regular Languages}
\newcommand{\ucambridge}{\normalfont \text{\textipa{D}}}
\newcommand{\ethz}{\text{\normalfont \textipa{Q}}}
\newcommand{\usi}{\normalfont \text{\textipa{N}}}
\newcommand{\jhu}{\normalfont \text{\textipa{6}}}
\author{%
Clemente Pasti$^{\usi,\ethz}$%
~\;~\;~Andreas Opedal$^{\ethz}$%
~\;~\;~Tiago Pimentel$^{\ucambridge}$ \\
\textbf{Tim Vieira}$^{\jhu}$
~\;~\;~\textbf{Jason Eisner}$^{\jhu}$~\;~\;~\textbf{Ryan Cotterell}$^{\ethz}$\\
    $^{\usi}$Universit{\`a} della Svizzera Italiana
   ~\;~\;~\;~$^{\ethz}$ETH Z{\"u}rich \\
  $^{\ucambridge}$University of Cambridge%
  ~\;~\;~\;~$^{\jhu}$Johns Hopkins University
   \\
\texttt{\href{mailto:clemente.pasti@usi.ch}{clemente.pasti@usi.ch}}%
  ~\;~ \texttt{\href{mailto:andreas.opedal@inf.ethz.ch}{andreas.opedal@inf.ethz.ch}}%
  ~\;~ \texttt{\href{mailto:tp472@cam.ac.uk}{tp472@cam.ac.uk}} \\  \texttt{\href{mailto:tim.f.vieira@gmail.com}{tim.f.vieira@gmail.com}}%
  ~\;~ \texttt{\href{mailto:jason@cs.jhu.edu}{jason@cs.jhu.edu}}%
  ~\;~ \texttt{\href{mailto:ryan.cotterell@inf.ethz.ch}{ryan.cotterell@inf.ethz.ch}}
 }
\date{}
\begin{document}
\maketitle
\begin{abstract}
The Bar-Hillel construction is a classic result in formal language theory. 
It shows, by a simple construction, that the intersection of a context-free language and a regular language is itself context-free.  
In the construction, the regular language is specified by a finite-state automaton.
However, neither the original construction \citep{BarHillel61} nor its weighted extension \citep{nederhof-satta-2003-probabilistic} can handle finite-state automata with $\varepsilon$-arcs. 
While it is possible to remove $\varepsilon$-arcs from a finite-state automaton efficiently without modifying the language, such an operation modifies the automaton's set of paths.
We give a construction that generalizes the Bar-Hillel in the case where the desired automaton has $\varepsilon$-arcs, and further prove that our generalized construction leads to a grammar that encodes the structure of both the input automaton and grammar while retaining the asymptotic size of the original construction.
\newline
\newline
\vspace{1.5em} 
\hspace{.5em}\includegraphics[width=1.25em,height=1.25em]{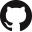}{\hspace{.75em}\parbox{\dimexpr\linewidth-2\fboxsep-2\fboxrule}{\url{https://github.com/rycolab/bar-hillel}}}
\end{abstract}

\vspace{-.75cm}
\section{Introduction}

\citeposs{BarHillel61} construction---together with its weighted generalization \citep{nederhof-satta-2003-probabilistic}---is a fundamental result in formal language theory.
Given a weighted context-free grammar (WCFG) $\grammar$
and a weighted finite-state automaton (WSFA) $\automaton$, the Bar-Hillel construction yields another WCFG $\grammarcap$ whose language $L(\grammarcap)$ is equal to the intersection of $\grammarlanguage$ with $\automatonlanguage$.
Importantly, the Bar-Hillel construction directly proves that weighted context-free languages are closed under intersection with weighted regular languages. The construction was later extended to other formalisms, e.g., tree automata \citep{maletti-satta-2009-parsing}, synchronous tree substitution grammars \citep{maletti-2010-synchronous} and linear context-free re-writing systems \citep{Kasami-Seki,nederhof-satta-2011-prefix-probabilities}.
Furthermore, the Bar-Hillel construction has seen applications in the computation of infix probabilities \citep{nederhof-satta-2011-computation} and human sentence comprehension \citep{levy-2008-noisy, levy-2011-integrating}.\looseness=-1

Unfortunately, \citeauthor{BarHillel61}'s construction, as well as its weighted generalization by \citeauthor{nederhof-satta-2003-probabilistic}, requires the input automaton to be $\varepsilon$-free.\footnote{But they do not require the input grammar to be $\varepsilon$-free.}
Although any WFSA can be converted to a weakly equivalent\footnote{Two WFSAs are said to be weakly equivalent if they represent the same weighted formal language. \label{fn:weak-equivalence}}
$\varepsilon$-free WFSA using well-known techniques \citep{Mohri2000GenericE, mohri_semiring, Hanneforth2010RemovalBL}, such an approach adds an additional step of computation, typically increases the size of the output grammar $\grammarcap$, and does not, in general, maintain a bijection between derivations in $\grammarcap$ and the Cartesian product of the derivations in $\grammar$ and paths in $\automaton$. 
In other words, $\grammarcap$ is \emph{not} strongly equivalent
to the 
product of $\grammar$ and $\automaton$.\footnote{Strong equivalence is formally defined in \cref{defn: weighted join} and \cref{thm: central theorem}.%
}\looseness=-1
 
In this note, we generalize the classical Bar-Hillel construction to the case where the automaton we seek to intersect with the grammar has $\varepsilon$-arcs.
Our new construction produces a WCFG $\grammarcap$
that is strongly equivalent to the product of $\grammar$ and $\automaton$.
We further generalize the Bar-Hillel construction to work with arbitrary commutative semirings.
Finally, we give an asymptotic bound on the size of the resulting grammar and a detailed proof of correctness in the appendix.
\looseness=-1

\section{Languages, Automata, and Grammars}
As background, we now give formal definitions of semirings, weighted formal languages, finite-state automata, and context-free grammars.

\begin{figure*}[t]
\centering
\tabskip=0pt
\valign{#\cr
  \noalign{\hfill}\vfill
  \hbox{%
    \begin{subfigure}{.5\textwidth}
    \centering
    \resizebox{.85\textwidth}{!}{%
\begin{tikzpicture}[node distance = 28mm]
    \node[state, initial] (q0) [] { $q_0$ };
    \node[state] (q1) [ right of=q0] { $q_1$ };
    \node[state] (q2) [ right of=q1] { $q_2$ };
    \node[state, accepting] (q3) [ right of=q2] { $q_3$ };
    \draw[-{Latex[length=3mm]}] 
    (q0) edge[ above] node{ $\textit{The}/ 2$ } (q1) 
    (q1) edge[ above ] node{ $\varepsilon/ 0.3$ } (q2)
    (q2) edge[ loop above  ] node{$\textit{many}/0.75$} (q2)
    (q2) edge[ above] node{ $\textit{cyclists}/ 1$ } (q3) 
    (q3) edge[ loop above] node{ $\varepsilon/ 0.6$ } (q3)
    ;
    
\end{tikzpicture}
}
    \caption{Weighted finite-state automaton}
    \end{subfigure}%
  }\vfill
  \hbox{%
    \begin{subfigure}{.5\textwidth}
    \centering
    \small
    \vspace{-3pt}
    \input{imgs/grammar_wcfg}
    \vspace{-20pt}
    \caption{Weighted context-free grammar}
    \end{subfigure}%
  }\vfill
  \hbox{%
    \begin{subfigure}{.5\textwidth}
    \centering
    \resizebox{.42\textwidth}{!}{%
\begin{forest}
[,phantom,s sep=1.5cm,    
        [ $\NT{S}$
                [ $\NT{Det}$ [  $\textit{The}$ ] ]             
                [  $\NT{NP}$ 
                    [   $\NT{Adj}$    [ $\textit{many}$]  ]
                    [ $\NT{NP}$ 
                        [$\NT{Adj} $ [$\varepsilon$] ]
                        [$\NT{N}$ [$\textit{cyclists}$]  ]
                    ]
                ]
        ]
]
\end{forest}%
}
    \caption{Original derivation}
    \end{subfigure}%
  }\vfill\cr
  \noalign{\hfill}\vfill
  \hbox{%
    \begin{subfigure}[b]{.5\textwidth}
    \centering
    \resizebox{\textwidth}{!}{%
\begin{forest}
[,phantom,s sep=1.5cm,
    [$\NT{S}$ 
        [$\fsanonterm{\negterm{S}}{q_0}{q_3}$ 
            [ $\fsanonterm{\negterm{S}}{q_0}{q_3}$
                [ $\fsanonterm{\negterm{S}}{q_0}{q_3}$ 
                    [ $\fsanonterm{\NT{S}}{q_0}{q_3}$
                        [ $\fsanonterm{\NT{Det}}{q_0}{q_1}$ 
                            [ $\fsaterm{\textit{The}}{q_0}{q_1}$ [$\textit{The}$]] 
                        ]
                        [  $\fsanonterm{\NT{NP}}{q_1}{q_3}$ 
                            [   $\fsanonterm{\NT{Adj}}{q_1}{q_2}$
                                [$\fsaterm{\textit{many}}{q_1}{q_2}$ 
                                    [$\fsaterm{\varepsilon}{q_1}{q_2}$ 
                                        [$\varepsilon$]
                                    ] 
                                    [$\fsaterm{\textit{many}}{q_2}{q_2}$ 
                                        [$\textit{many}$]
                                    ]
                                ] 
                            ]
                            [ $\fsanonterm{\NT{NP}}{q_2}{q_3}$ 
                                [$\fsanonterm{\NT{Adj}}{q_2}{q_2}$ 
                                    [$\varepsilon$]
                                ] 
                                [$\fsanonterm{\NT{N}}{q_2}{q_3}$
                                    [$\fsaterm{\textit{cyclists}}{q_2}{q_3}$ 
                                        [$\textit{cyclists}$]
                                    ]
                                ]
                            ]
                        ]
                    ]
                ]
                [ $\fsanonterm{\varepsilon}{q_3}{q_3}$ [ $\varepsilon$ ] ]
            ]
            [ $\fsanonterm{\varepsilon}{q_3}{q_3}$ 
                [ $\varepsilon$ ] 
            ]
        ]
    ]
]
\end{forest}%
}
    \caption{\label{fig:example_intersection_derivation}Derivation in the intersection grammar}
    \end{subfigure}%
  }\vfill\cr
  \noalign{\hfill}
}
\caption{Example of a derivation in the grammar obtained as the intersection of the finite-state automaton (a) and the context-free grammar (b). 
The derivation tree (d) encodes the derivation tree (c) in the original grammar, and path $\edge{q_0}{\textit{The}}{2}{q_1}\edge{}{\varepsilon}{0.3}{q_2}\edge{}{\textit{many}}{0.75}{q_2} \edge{}{\textit{cyclists}}{1}{q_3} \edge{}{\varepsilon}{0.6}{q_3} \edge{}{\varepsilon}{0.6}{q_3}$.
We use rules from \cref{eqn:paired_2_BHI_automaton_epsilon} for $\epsilon$-arcs appearing before an input symbol, and rules from \cref{eqn:paired_2_BHI_epsilon_other} for $\epsilon$-arcs appearing at the end of the input.}
\label{fig:tree}
\vspace{-7pt}
\end{figure*}

\subsection{Semirings}
\label{sec:semirings}
Semirings
are useful algebraic structures for describing weighted languages \citep[Chapter 1]{AutomataHandbook}. In order to define semirings we must first give the definition of a monoid.
A \defn{monoid} is a 3-tuple $\mathcal{M}=(\semiringset, \bullet,\one)$, where $\semiringset$ is a set, $\bullet: \semiringset \times \semiringset \rightarrow \semiringset$ is an associative operator, and $\one \,\in\, \semiringset$ is a distinguished identity element such that $\one \bullet w = w\bullet \one = w$ for any $w \in \semiringset$. We say that a monoid is \emph{commutative} if $\bullet$ commutes, i.e., $w_1 \bullet w_2 = w_2 \bullet w_1$ for any $w_1,w_2 \in \semiringset$. 
We can now give the definition of a semiring.
\begin{defin}
A \defn{semiring} $\semiring=\semiringtuple$ is a 5-tuple where $(\semiringset,\oplus,\zero)$ is a commutative monoid,  $(\semiringset,\otimes,\one)$ is a monoid, $\otimes$ distributes over $\oplus$, and $\zero$ is an \emph{annihilator} for $\otimes$, meaning that $\zero \otimes w = w \otimes \zero =\zero$ for any $w \in \semiringset$.
\end{defin}
We say that $\mathcal{W}$ is commutative if $\otimes$ commutes. In this work, we assume commutative semirings.

\subsection{Weighted Formal Languages}

This paper concerns itself with transforms between devices that generate
weighted formal languages.
\begin{defin}\label{def:formal-language}
Let $\alphabet$ be an alphabet and $\semiring=\semiringtuple$ be a semiring. 
Then a \defn{weighted formal language} $L: \alphabet^* \rightarrow \semiringset$ is a  mapping from the Kleene closure of $\alphabet$ to the set of weights $\semiringset$.
Furthermore, the set $\support{L}= \big \{  \str \in \alphabet^* \mid L(\str) \neq \zero \big \}$ is called the language's \defn{support}.\looseness=-1
\end{defin}
Unweighted formal languages \citep[e.g.,][]{sipser2006,hopcroft_and_ullmann} are simply the special case of \cref{def:formal-language} where $\semiring$ is the boolean semiring. 
In this note, we discuss algorithms for computing the intersection of two weighted formal languages.\footnote{The intersection of two weighted languages is also called their Hadamard product \citep[Chapter 1]{AutomataHandbook}.}\looseness=-1
\begin{defin}\label{def:intersection}
Let $L_1$ and $L_2$ be two weighted formal languages over the same alphabet $\alphabet$ and the same semiring $\semiring$. 
The \defn{intersection} of $L_1$ with $L_2$ is defined as the weighted language
\begin{align}
    \left(L_1 \cap L_2\right) (\str)\defeq L_1(\str) \otimes L_2(\str),  \hspace{3mm}\forall \str \in \alphabet^*
\end{align}
\end{defin}
Specifically, this paper concerns
itself with the special case of \cref{def:intersection} when $L_1$ is a weighted context-free language (represented by a WCFG), and $L_2$ is a weighted regular language (represented by a WFSA); we define these two formalisms in the subsequent sections.

In the following, the symbol $\varepsilon$ always represents the empty string.

\subsection{Weighted Finite-State Automata}
\label{sec:wfsa}
We now review the basics of weighted finite-state automata (WFSA), which provide a formalism to represent weighted regular languages.
\begin{defin}
A \textbf{weighted finite-state automaton} $\automaton$ over a semiring $\semiring=\semiringtuple$ is a 6-tuple $\wfsatuple$.
In this tuple, $\alphabet$ is an alphabet, $\states$ is a finite set of states, and $\trans \subseteq \states \times \states \times (\alphabet\cup \{ \epsilon\})  \times \semiringset$ is a finite multi-set of weighted arcs.
Further, $\initweight : \states \rightarrow \semiringset$ and $\finalweight : \states \rightarrow \semiringset$ are the initial and final weight functions, respectively.
We also define the sets $\initstates = \{\state \mid \state \in \states,\, \initweight(\state) \neq \zero \}$ and $\finalstates = \{ \state \mid \state \in \states,\, \finalweight(\state) \neq \zero \}$ for convenience.
\end{defin}

We will represent an arc in $\trans$ with the notation $\edge{\state_0}{a}{\weight}{\state_1}$ where $a \in \alphabet\cup\{\varepsilon\}$ 
and $\weight \in \semiringset$.
A \textbf{path} $\apath$ (of length $N > 0$) is a sequence of arcs in $\trans^*$ where the states of adjacent arcs are matched, i.e.,%
\begin{align}
    q_0 \xrightarrow{a_1 / w_1} \Compactcdots q_{n-1}\xrightarrow{a_n / w_n} q_{n}\Compactcdots\xrightarrow{a_{N} / w_N} q_N
\end{align}
and where $q_0 \in \initstates$ and $q_N \in \finalstates$,
i.e., 
the path starts at an initial state and ends at a final state.
The path's \textbf{yield}, denoted $\yield{\apath}$, is the concatenation $a_1 a_2\Compactcdots a_N$ of all its arc labels (strings of length $\leq 1$).  The path's \textbf{weight}, denoted $\weight{\apath}$, is the product
\begin{equation}
\weight{\apath} = \initweight(q_0) \otimes \left( \bigotimes_{n=1}^N w_n \right) \otimes \finalweight(q_N)
\end{equation}
We denote the set of all paths in $\automaton$ as $\derivation{\automaton}$, and the set of all paths with yield $\str$ as $\derivation{\automaton}(\str)$.
Finally, we define the \defn{language of an automaton} as the mapping $L_{\automaton}: \alphabet^* \rightarrow \semiringset$ where we have\footnote{\label{fn:infsum}In the main paper we gloss over the question of how $\bigoplus$-summations over \emph{infinite} sets are to be defined (or left undefined), but we treat this issue in \cref{sec:proof-weights}.}
$ L_{\automaton}(\str) = \bigoplus_{\apath \in \derivation{\automaton}(\str)}  \weight{\apath}.$
The set of languages that can be encoded by a WFSA forms the class of \defn{weighted regular languages}.\looseness=-1

\subsection{Weighted Context-Free Grammars}\label{sec:wcfg}
We now go over the necessary background on weighted context-free grammars (WCFGs).\looseness=-1
\begin{defin}
A \textbf{weighted context-free grammar} is a tuple $\grammar = \wcfgtuple{}$, where $\nonterm$ is a non-empty set of nonterminal symbols, $\alphabet$ is an alphabet of terminal symbols, $\semiring=\semiringtuple$ is a semiring, $\start \in \nonterm$ is a distinguished start symbol, and $\rules$ is a set of production rules.
Each rule $\arule \in \rules$ is of the form $\wproduction{\nt{X}}{\valpha}{w}$, with $\nt{X} \in \nonterm$, $w \in \semiringset$, and $\valpha \in  (\alphabet \cup \nonterm)^*$.\looseness=-1
\end{defin}

Given two strings $\valpha, \vbeta \in (\alphabet \, \cup \, \nonterm)^*$,
we write $\valpha \overset{\arule}{\Rightarrow}_{L} \vbeta$ if and only if we can express $\valpha = \strz\,\NT{X}\,\vdelta$ and $\vbeta = \strz\,\vgamma\,\vdelta$ where $\strz \in \alphabet^*$ and  $\arule \in \rules$ is the rule $\wproduction{\NT{X}}{\vgamma}{w}$.
A \defn{derivation} $\tree$ (more precisely, a leftmost derivation) is a sequence $\valpha_0,\ldots,\valpha_{N}$ with $N > 0$, 
$\valpha_{0} = \start$, and $\valpha_N \in \alphabet^*$, such that for all $0 < n \leq N$, we have $\valpha_{n-1} \overset{\arule_n}{\Rightarrow}_{L}\valpha_{n}$ for some (necessarily unique) 
$\arule_n \in \rules$.
The derivation's \defn{yield}, $\yield{\tree}$, is $\valpha_{N}$, and its weight, $\weight(\tree)$, is  $\weight{\arule_1}\otimes\cdots\otimes\weight{\arule_N}$.
We denote the set of derivations under a grammar $\grammar$ as $\derivation_{\grammar}$ and the set of all derivations with yield $\str$ as $\derivation_{\grammar}(\str)$.
Finally, we define the \defn{language of a grammar} as $L_{\grammar}$ where\footnoteref{fn:infsum}\
$  L_{\grammar} (\str) \defeq \bigoplus_{\tree \in \derivation{\grammar}(\str)} 
    \weight{\tree}, \; \forall \str \in \alphabet^*$.
The languages that can be encoded by a WCFG are known as \defn{weighted context-free languages}.
\looseness=-1

\newcommand{\includevspacecompensation}{-30pt}

\section{Generalizing Bar-Hillel}\label{sec:generalizing}
Given any context-free grammar (CFG) $\grammar$ and finite-state automaton (FSA) $\automaton$, \citet{BarHillel61} showed how to construct a CFG $\grammarcap$ such that $L_{\grammarcap}\! =\! L_{\grammar} \cap L_{\automaton}$.  Later, \citet{nederhof-satta-2003-probabilistic} generalized Bar-Hillel's construction to work on a \emph{weighted} context-free grammar
and a \emph{weighted} finite-state automaton.  While they focused on the real semiring, their construction actually works for any commutative semiring. However, neither of these versions correctly computes the intersection when the WFSA (or FSA) contains $\varepsilon$-arcs.
Yet, in several applications---such as modeling noisy inputs for human sentence comprehension \citep{levy-2008-noisy, levy-2011-integrating}---we may be interested in using a WFSA $\automaton$ that contains $\varepsilon$-arcs. 
A na\"ive application of the construction would ignore paths in $\automaton$ that contain $\varepsilon$-arcs.
The problem may be sidestepped by transforming $\automaton$ into a weakly equivalent $\varepsilon$-free WFSA\footnote{See  \cref{fn:weak-equivalence} for the definition of weak equivalence.%
} before applying the construction;\footnote{\Citet{levy-2008-noisy, levy-2011-integrating} uses WFSAs to model the degree of uncertainty under which a human comprehends a particular sentence, in which $\varepsilon$-arcs are used to represent word deletion. He applies the Bar-Hillel construction to compute the intersection of the language represented by the WFSA and the language encoded by a WCFG that represents the comprehender's grammatical knowledge, in order to obtain a joint posterior distribution over parses and words. While he transforms $\automaton$ to eliminate $\varepsilon$-arcs prior to applying the Bar-Hillel construction (Levy, p.c.), the solution we propose here is an alternative.
}
this, however, might increase the size of the WFSA and of the intersection grammar, and it would not allow us to identify the paths in the input WFSA that yield a target string in the intersection grammar.\footnote{In contrast, this is easy under our construction. 
Each derivation of the target string under $\grammarcap$ uses a particular path in $\automaton$.
To reconstruct that path, $\varepsilon$-arcs and all, simply traverse from left to right the leaves of the derivation tree (e.g., \cref{fig:example_intersection_derivation}) and list the states on the triplets where rule \labelcref{eq:paired_2_fsa} is applied.}\looseness=-1
\newcommand{\darkredcolor}[1]{\textcolor{darkred}{#1}}
\newcommand{\lightredcolor}[1]{\textcolor{lightred}{#1}}
\newcommand{\darkgraycolor}[1]{\textcolor{darkgray}{#1}}
\newcommand{\newconstructioncolor}[1]{\textcolor{darkblue}{#1}}
\begin{figure*}[h]
    \centering
    \begin{subfigure}[t]{0.48\textwidth}
        \input{imgs/construction_1}
    \end{subfigure}%
    \hfill
    \begin{subfigure}[t]{0.48\textwidth}
        \input{imgs/construction_2}
    \end{subfigure}
    \caption{The original Bar-Hillel construction (left) and our generalized version (right) that covers $\varepsilon$-arcs. We highlight the differences from the original construction in \darkredcolor{red}. 
    Note that the weights of rules \labelcref{eq:paired_1_start,eq:paired_1_fsa} (respectively \labelcref{eq:paired_2_start,eq:paired_2_fsa}) encode the weights of the WFSA, while the weights of rules \labelcref{eq:paired_1_cfg,eq:paired_1_eps} (respectively \labelcref{eq:paired_2_cfg,eq:paired_2_eps}) encode weights of the WCFG. All other rules in the generalized construction (\labelcref{eqn:paired_2_BHI_automaton_epsilon,eqn:paired_2_BHI_exit_other,eqn:paired_2_BHI_epsilon_other}) are assigned weight $\one$, and, thus, they do not change the weight of a derivation.}
    \label{fig: figure new construction}
\end{figure*}

\subsection{The problem with $\varepsilon$-arcs}\label{sec:eps-arcs-problem}

Before proposing our solution, we explain how the original construction works, and how it fails in the case of $\varepsilon$-arcs. Given a  WFSA $\automaton=\wfsatuple$ and a WCFG $\grammar=\wcfgtuple{}$ over the same alphabet $\alphabet$ and commutative semiring $\semiring$, their intersection $\grammarcap$ is defined by the tuple $\grammarcaptuple$, where: 
\begin{itemize}[nosep]
\item The set of nonterminal symbols $\nontermcap=\{\start\}\cup \states\times( \nonterm \cup \alphabet )\times \states$ contains the triplets $\fsanonterm{\nt{X}}{q_i}{q_j}$ plus the start symbol $\start$.\footnote{\label{fn:useless}Many of the nonterminals will turn out to be \defn{useless} in that they do not participate in any derivation in $\derivation_{\grammarcap}$.  These can be pruned from the grammar along with all rules that mention them \cite{hopcroft_and_ullmann}.\saveForCr{We remark without details that as an optimization, it is possible to avoid constructing many of the useless nonterminals in the first place, by instantiating useful rules bottom-up and left-to-right---essentially running Earley's algorithm to parse the ``input'' $\automaton$.}}
\item The set of production rules $\rulescap$ is given by the equations in Construction 1 of \cref{fig: figure new construction}.\footnote{Note that this construction can handle multiple initial and final states, whereas \citeposs{nederhof-satta-2003-probabilistic} construction assumes a WFSA with a single initial and a single final state. A path's initial and final weights are taken into account by the weight of rules \labelcref{eq:paired_1_start} of Construction 1 in \cref{fig: figure new construction}.\looseness=-1}
\item $\alphabet$, $\semiring$, $\start$ are the same as in the input grammar.
\end{itemize}
The intuition behind this construction is that a derivation in the intersection grammar encodes both a path in the input WFSA and a derivation in the input WCFG with matching yield. Specifically, rules \labelcref{eq:paired_1_fsa} encode arcs in the WFSA and rules \labelcref{eq:paired_1_cfg} encode production rules in the WCFG. Rules \labelcref{eq:paired_1_eps} handle the special case of $\varepsilon$-productions in the input WCFG and rules \labelcref{eq:paired_1_start} are designed to take into account the initial and final weight of a path.
These rules may combine through matching nonterminals to permit derivations in the intersection grammar $\grammarcap$.  

Unfortunately, this mechanism breaks in the presence of $\varepsilon$-arcs.  Although the rules \labelcref{eq:paired_1_fsa} do construct nonterminals for $\varepsilon$-arcs (when $a=\varepsilon$), the  rules \labelcref{eq:paired_1_cfg} never generate those nonterminals (since the $\nt{X}_m$ on the right-hand side of a rule are never $\varepsilon$).  We show this with an example.
Consider the automaton and the grammar in \cref{fig:tree},  both of which assign non-zero weight to the string \textit{The many cyclists}. However, their intersection computed with the Bar-Hillel construction is empty. To see this, note that all the paths from $q_0$ to $q_3$ contain the arc $\edge{q_1}{\varepsilon}{0.3}{q_2}$. \cref{eq:paired_1_fsa} will create a rule $\wproduction{\fsanonterm{\varepsilon}{q_1}{q_2}}{\varepsilon}{0.3}$, but none of the rules produced by \cref{eq:paired_1_cfg,eq:paired_1_eps} has the triplet $\fsanonterm{\varepsilon}{q_1}{q_2}$ on the right hand side.
This misalignment results in an empty set of derivations in $\grammarcap$. In \cref{sec:appendix_failure_case} we describe more failure cases in a detailed manner.\looseness=-1
\subsection{Our generalized construction} 
We now 
describe an improved version of the Bar-Hillel construction that handles $\varepsilon$-arcs in the WFSA.
In comparison to the original construction, our version of $\grammarcap=\grammarcaptuple$ has
\begin{itemize}[nosep]
\item $\nontermcap= \{\start\}\, \cup \,\states \times (\nonterm \cup \{\negterm{\start} \}\cup \alphabet) \times \states $ as the set of nonterminals, where $\negterm{\start}$ is a new symbol;
\item $\rulescap$ as the augmented set of production rules given in Construction 2 of \cref{fig: figure new construction}.
\end{itemize}

\noindent
Our generalized construction adds additional production rules that traverse the $\varepsilon$-arcs. Rules \labelcref{eqn:paired_2_BHI_automaton_epsilon} can traverse a WFSA subpath labeled with $\varepsilon^* a$
to yield a terminal symbol $a \in \Sigma$.  At the end of the yielded string, rules \labelcref{eqn:paired_2_BHI_epsilon_other} can traverse a WFSA subpath 
labeled with $\varepsilon^*$ that ends at a final state $q_F$.
Our construction carefully avoids overcounting\footnote{\label{fn:nospurious}As \cref{fig:example_intersection_derivation} illustrates, we do this by introducing a single, right-branching subderivation for each $\automaton$-subpath $\varepsilon^* a$ that matches an input symbol $a$. A nonterminal of the form $(q_0,\varepsilon,q_1)$ is never used as a right child, nor does it ever combine with a nonterminal of the form $(q_0,\NT{X},q_1)$, except at the end of the input, which is specially handled by rules \labelcref{eqn:paired_2_BHI_epsilon_other}.  
Similarly, \citet{transducer_composition} avoid overcounting when intersecting or composing finite-state machines that have $\varepsilon$-arcs.\looseness=-1} 
by ensuring that each matching pair of an $\automaton$-path and a $\grammar$-derivation of its string corresponds to \emph{exactly one} $\grammarcap$-derivation of that string, as illustrated in \cref{fig:tree}.
Note that rules \labelcref{eq:paired_2_cfg,eq:paired_2_fsa,eq:paired_2_eps} are identical to their counterparts in the original construction.
Rules \labelcref{eq:paired_2_start} are a modified version of rules \labelcref{eq:paired_1_start} with the special start symbol $\negterm{\start}$; this allows our construction to handle $\varepsilon$-arcs immediately before the final state---by repeated applications of rule \labelcref{eqn:paired_2_BHI_epsilon_other}---before switching  $\negterm{\start}$ back to $\start$ with rule $\labelcref{eqn:paired_2_BHI_exit_other}$. 
In \cref{sec:appendix_failure_case} we illustrate the mechanism with examples.
\looseness=-1

We now state the theorem of correctness.\looseness=-1
\begin{defin}
\label{defn: weighted join}
Let $\alphabet$ be an alphabet and $\semiring$ be a commutative semiring.
Let $\grammar$ be a WCFG and $\automaton$ be a WFSA---both over $\alphabet$ and $\semiring$.
The \defn{weighted join}
of the derivations in $\derivation_{\grammar}$ with the paths in  $\derivation_{\automaton}$ is defined as:
\begin{align}
\left( \derivation_{\grammar} \bowtie \derivation_{\automaton}\right) \defeq \Big\{ & \langle \tree, \apath\rangle \mid \tree \in \derivation_{\grammar}, \apath \in \derivation_{\automaton} \\ 
&\text{ s.t. } \yield(\tree) = \yield(\apath)  \Big\} \nonumber
\end{align}
with $\weight\left(\langle \tree, \apath\rangle \right) = \weight(\tree) \otimes \weight(\apath)$.
\end{defin}
\begin{restatable*}{theorem}{maintheorem}
\label{thm: central theorem}
Let $\grammar$ be a WCFG and $\automaton$ a WFSA over the same alphabet $\alphabet$ and commutative semiring $\semiring$. 
Let $\grammarcap$ be the grammar obtained with our generalized construction. 
Then we have
strong equivalence between $\grammarcap$ and $\langle\grammar, \automaton\rangle$; meaning that there is a weight-preserving, yield-preserving bijection between $\derivation{\grammarcap}$ and  $\left( \derivation_{\grammar}\bowtie\derivation_{\automaton}\right)$.
\end{restatable*}
\begin{restatable*}{corollary}{maincorollary}\label{thm:weights}
$\grammarcap$ and $\langle\grammar, \automaton\rangle$ are weakly equivalent, meaning that $L_{\grammarcap}(\str) = L_{\grammar}(\str) \otimes L_{\automaton}(\str)$ whenever the values on the right-hand side are defined.
\end{restatable*}
See \cref{sec:big-proof} for proofs.
\cref{thm: central theorem} may be seen as a generalization of Theorem $8.1$ by \citet{BarHillel61} and Theorem $12$ by \citet{nederhof-satta-2003-probabilistic}. 
Indeed, the set of derivations produced by Construction 1 is equivalent to the set of derivations produced by Construction 2, modulo an unfold transform \citep{tamaki-sato-unfold-fold} to remove rules containing $\negterm{\start}$.
 Among the groups of rules listed in \cref{fig: figure new construction}, the set of rules with maximum cardinality
is the one defined by \cref{eq:paired_2_cfg}. This set has cardinality $\bigo{|\rules||\states|^{M_\star}}$, where $M_\star$ is $1$ plus the length of the longest right-hand side
among all the rules $\rules$. 
All other equations in this construction 
lead to smaller sets of added rules. Since \cref{eq:paired_2_cfg} is unchanged from \cref{eq:paired_1_cfg} in the original construction, the asymptotic bound on the number of rules in our output grammar remains unchanged.\looseness=-1
\section{Conclusion}
We generalized the weighted Bar-Hillel intersection construction so that the given WFSA may contain $\varepsilon$-arcs.
Our construction is strongly equivalent to the product of the original WCFG and WFSA, i.e., every derivation tree in the resulting grammar represents a pairing of a derivation tree in the input WCFG and a path in the WFSA with the same yield.  We gave a full proof of correctness for our construction.  By adding output strings to the WFSA arcs and having rule \labelcref{eq:paired_2_fsa} rewrite to the arc's output string, our method can also be used to compose a WCFG with a weighted finite-state \emph{transducer} (WFST) that could usefully model morphological post-processing or speaker errors.
\looseness=-1

\section{Acknowledgements}
The authors acknowledge Roger Levy for correspondence about \citet{levy-2008-noisy} and \citet{levy-2011-integrating}.
\section{Limitations}
In this note, we generalize a fundamental theoretical result in formal language theory, which has seen a variety of practical applications, including human sentence comprehension under uncertain input \citep{levy-2008-noisy,levy-2011-integrating} and infix probability computation \citep{nederhof-satta-2003-probabilistic}.
Although we motivate our paper by discussing the necessity of performing intersections on automata with $\varepsilon$-arcs, we do not explore any such practical applications.
Further, while we show that the asymptotic bound on the size of our intersection grammar matches the original Bar-Hillel construction's, we do not discuss multiplicative or added constants introduced in our grammar's size.

\section*{Ethical Statement}
We do not foresee any ethical issues with our work.
\bibliographystyle{acl_natbib}
\bibliography{custom}

\begin{thebibliography}{18}
\expandafter\ifx\csname natexlab\endcsname\relax\def\natexlab#1{#1}\fi

\bibitem[{Allauzen et~al.(2010)Allauzen, Riley, and
  Schalkwyk}]{transducer_composition}
Cyril Allauzen, Michael Riley, and Johan Schalkwyk. 2010.
\newblock \href {https://dl.acm.org/doi/10.5555/1964285.1964289} {Filters for
  efficient composition of weighted finite-state transducers}.
\newblock In \emph{Proceedings of the 15th International Conference on
  Implementation and Application of Automata}, International Conference on
  Implementation and Application of Automata, page 28–38, Berlin, Heidelberg.
  Springer-Verlag.

\bibitem[{Bar-Hillel et~al.(1961)Bar-Hillel, Perles, and Shamir}]{BarHillel61}
Yehoshua Bar-Hillel, M.~Perles, and E.~Shamir. 1961.
\newblock \href {https://doi.org/https://doi.org/10.1524/stuf.1961.14.14.143}
  {On formal properties of simple phrase structure grammars}.
\newblock \emph{Zeitschrift {f\" ur} Phonetik, Sprachwissenschaft und
  Kommunikationsforschung}, 14:143--172.
\newblock Reprinted in Y. Bar-Hillel. (1964). {\em Language and Information:
  Selected Essays on their Theory and Application}, Addison-Wesley 1964,
  116--150.

\bibitem[{Droste et~al.(2009)Droste, Kuich, and Vogler}]{AutomataHandbook}
Manfred Droste, Werner Kuich, and Heiko Vogler. 2009.
\newblock \href {https://doi.org/10.1007/978-3-642-01492-5} {\emph{Handbook of
  Weighted Automata}}.
\newblock Springer Berlin, Heidelberg.

\bibitem[{Hanneforth and de~la Higuera(2010)}]{Hanneforth2010RemovalBL}
Thomas Hanneforth and Colin de~la Higuera. 2010.
\newblock \href {https://doi.org/doi:10.1524/9783050062365.297}
  {$\epsilon$-removal by loop reduction for finite-state automata}.
\newblock In \emph{Language and Logos}, pages 297--312, Berlin. Akademie
  Verlag.

\bibitem[{Hopcroft et~al.(2006)Hopcroft, Motwani, and
  Ullman}]{hopcroft_and_ullmann}
John~E. Hopcroft, Rajeev Motwani, and Jeffrey~D. Ullman. 2006.
\newblock \href
  {https://www.pearson.com/en-us/subject-catalog/p/introduction-to-automata-theory-languages-and-computation/P200000003517/9780321455369}
  {\emph{Introduction to Automata Theory, Languages, and Computation}}, 3
  edition.
\newblock Addison-Wesley Longman Publishing Co., Inc., USA.

\bibitem[{Huang(2008)}]{huang-2008-advanced}
Liang Huang. 2008.
\newblock \href {https://aclanthology.org/C08-5001} {Advanced dynamic
  programming in semiring and hypergraph frameworks}.
\newblock In \emph{Coling 2008: Advanced Dynamic Programming in Computational
  Linguistics: Theory, Algorithms and Applications - Tutorial notes}, pages
  1--18, Manchester, UK. Coling 2008 Organizing Committee.

\bibitem[{Levy(2008)}]{levy-2008-noisy}
Roger Levy. 2008.
\newblock \href {https://aclanthology.org/D08-1025} {A noisy-channel model of
  human sentence comprehension under uncertain input}.
\newblock In \emph{Proceedings of the 2008 Conference on Empirical Methods in
  Natural Language Processing}, pages 234--243, Honolulu, Hawaii. Association
  for Computational Linguistics.

\bibitem[{Levy(2011)}]{levy-2011-integrating}
Roger Levy. 2011.
\newblock \href {https://aclanthology.org/P11-1106} {Integrating surprisal and
  uncertain-input models in online sentence comprehension: formal techniques
  and empirical results}.
\newblock In \emph{Proceedings of the 49th Annual Meeting of the Association
  for Computational Linguistics: Human Language Technologies}, pages
  1055--1065, Portland, Oregon, USA. Association for Computational Linguistics.

\bibitem[{Maletti(2010)}]{maletti-2010-synchronous}
Andreas Maletti. 2010.
\newblock \href {https://aclanthology.org/N10-1130} {Why synchronous tree
  substitution grammars?}
\newblock In \emph{Human Language Technologies: The 2010 Annual Conference of
  the North {A}merican Chapter of the Association for Computational
  Linguistics}, pages 876--884, Los Angeles, California. Association for
  Computational Linguistics.

\bibitem[{Maletti and Satta(2009)}]{maletti-satta-2009-parsing}
Andreas Maletti and Giorgio Satta. 2009.
\newblock \href {https://aclanthology.org/W09-3801} {Parsing algorithms based
  on tree automata}.
\newblock In \emph{Proceedings of the 11th International Conference on Parsing
  Technologies}, pages 1--12, Paris, France. Association for Computational
  Linguistics.

\bibitem[{Mohri(2001)}]{Mohri2000GenericE}
Mehryar Mohri. 2001.
\newblock \href {https://doi.org/10.1007/3-540-44674-5_19} {Generic
  $\epsilon$-removal algorithm for weighted automata}.
\newblock In \emph{Implementation and Application of Automata}, pages 230--242,
  Berlin, Heidelberg. Springer Berlin Heidelberg.

\bibitem[{Mohri(2002)}]{mohri_semiring}
Mehryar Mohri. 2002.
\newblock \href {https://dl.acm.org/doi/10.5555/639508.639512} {Semiring
  frameworks and algorithms for shortest-distance problems}.
\newblock \emph{Journal of Automata, Languages and Combinatorics},
  7(3):321–350.

\bibitem[{Nederhof and Satta(2003)}]{nederhof-satta-2003-probabilistic}
Mark-Jan Nederhof and Giorgio Satta. 2003.
\newblock \href {https://aclanthology.org/W03-3016} {Probabilistic parsing as
  intersection}.
\newblock In \emph{Proceedings of the Eighth International Conference on
  Parsing Technologies}, pages 137--148, Nancy, France.

\bibitem[{Nederhof and
  Satta(2011{\natexlab{a}})}]{nederhof-satta-2011-computation}
Mark-Jan Nederhof and Giorgio Satta. 2011{\natexlab{a}}.
\newblock \href {https://aclanthology.org/D11-1112} {Computation of infix
  probabilities for probabilistic context-free grammars}.
\newblock In \emph{Proceedings of the 2011 Conference on Empirical Methods in
  Natural Language Processing}, pages 1213--1221, Edinburgh, Scotland, UK.
  Association for Computational Linguistics.

\bibitem[{Nederhof and
  Satta(2011{\natexlab{b}})}]{nederhof-satta-2011-prefix-probabilities}
Mark-Jan Nederhof and Giorgio Satta. 2011{\natexlab{b}}.
\newblock \href {https://aclanthology.org/W11-2919} {Prefix probabilities for
  linear context-free rewriting systems}.
\newblock In \emph{Proceedings of the 12th International Conference on Parsing
  Technologies}, pages 151--162, Dublin, Ireland. Association for Computational
  Linguistics.

\bibitem[{Seki et~al.(1991)Seki, Matsumura, Fujii, and Kasami}]{Kasami-Seki}
Hiroyuki Seki, Takashi Matsumura, Mamoru Fujii, and Tadao Kasami. 1991.
\newblock \href {https://doi.org/10.1016/0304-3975(91)90374-B} {On multiple
  context-free grammars}.
\newblock \emph{Theoretical Computer Science}, 88(2):191--229.

\bibitem[{Sipser(2006)}]{sipser2006}
Michael Sipser. 2006.
\newblock \href
  {https://books.google.ch/books/about/Introduction_to_the_Theory_of_Computatio.html?id=1aMKAAAAQBAJ&redir_esc=y}
  {\emph{Introduction to the Theory of Computation}}, 2 edition.
\newblock Thomson Course Technology.

\bibitem[{Tamaki and Sato(1984)}]{tamaki-sato-unfold-fold}
Hisao Tamaki and Taisuke Sato. 1984.
\newblock \href {https://ci.nii.ac.jp/naid/10000035006/} {Unfold/fold
  transformation of logic programs}.
\newblock In \emph{Proceedings of the Second International Logic Programming
  Conference}, pages 127--138, Uppsala, Sweden. Uppsala University.

\end{thebibliography}

\appendix
\onecolumn

\section{Failure Cases of Original Construction}
\label{sec:appendix_failure_case}
We distinguish two types of failure cases:
(i) $\support{\lang_{\grammarcap}}\neq \support{\lang_{\automaton}}\cap\support{\lang_{\grammar}}$ and (ii) $\lang_{\grammarcap} \neq \lang_{\automaton}\cap\lang_{\grammar}$, both of which we will exemplify now. 
Notably, the case (ii) follows from (i), but---to be comprehensible---we will nonetheless give an example where (ii) fails without (i).
For case (i), consider the following unweighted FSA:
\begin{figure}[H] 
    \centering
     \begin{tikzpicture}[node distance = 20mm]
     \node[state, initial] (q0) [] { $q_0$ }; 
     \node[state] (q1) [ right of=q0] { $q_1$ };
     \node[state] (q2) [ right of=q1] { $q_2$ };
     \node[state, accepting] (q3) [ right of=q2] { $q_3$ };
     \draw[-{Latex[length=2.5mm]}]
     (q0) edge[ above] node{ $\textit{a}$ } (q1)
     (q1) edge[ above] node{ $\varepsilon$ } (q2)
     (q2) edge[ above] node{$\textit{b}$} (q3);
     \end{tikzpicture}
 \label{fig:failure_fsa_1}
\end{figure}
\noindent and the following unweighted CFG:
\begin{align*}
    &\production{\NT{\start}}{\NT{A} \, \nt{B}}  \\
    &\production{\NT{A}}{\textit{a}} \\
    &\production{\NT{B}}{\textit{b}}
\end{align*}
It is easy to see that the intersection of the language accepted by the FSA and the language generated by the CFG is $\{\textit{ab}\}$. 
Construction 1, however, outputs an empty grammar (after pruning useless rules as in \cref{fn:useless}) and, hence, an empty language. 
To see this, consider \cref{eq:paired_1_cfg} and \cref{eq:paired_1_fsa}. 
First, \cref{eq:paired_1_fsa} will create a rule $\production{\fsanonterm{\varepsilon}{\state_1}{\state_2}}{\varepsilon}$, but $\fsanonterm{\varepsilon}{\state_1}{\state_2}$ will be useless because it cannot be reached from any of the rules produced by \cref{eq:paired_1_cfg}. 
Second, \cref{eq:paired_1_cfg} will produce reachable nonterminals
$\fsanonterm{\nt{A}}{\state_0}{\state_i}$ and $\fsanonterm{\nt{B}}{\state_i}{\state_3}$, with $i\in\{1,2\}$. The case of $i=1$ will reach $\textit{a}$ but not $\textit{b}$, and $i=2$ will reach $\textit{b}$ but not $\textit{a}$. Let us now show how our generalized construction fixes this failure case. \cref{eqn:paired_2_BHI_automaton_epsilon} generates the rule $\production{\fsaterm{\textit{b}}{q_1}{q_3}}{\fsaterm{\varepsilon}{q_1}{q_2}\fsaterm{\textit{b}}{q_2}{q_3}}$ which then combines with rule $\production{\fsanonterm{B}{q_1}{q_3}}{\fsaterm{\textit{b}}{q_1}{q_3}}$ to form a subderivation\footnote{In \cref{sec:big-proof} we give a formal definition of subderivation.} that covers the substring $\varepsilon \textit{b}$, as shown in the picture below.

\begin{figure}[H]
    \centering
 \begin{forest}
[,phantom,s sep=1.5cm,
 [ $\fsanonterm{B}{q_1}{q_3}$
 [ $\fsaterm{\textit{b}}{q_1}{q_3}$
    [$\fsaterm{\varepsilon}{q_1}{q_2}$ [$\varepsilon$]]
    [$\fsaterm{\textit{b}}{q_2}{q_3}$ [$\textit{b}$]]
]
]
]
\end{forest}
\end{figure}
\noindent Note that rules generated by \cref{eqn:paired_2_BHI_automaton_epsilon} can only mention symbol $\varepsilon$ in the left child, not in the right child, as discussed in \cref{fn:nospurious}.

As stated above, to be comprehensive, we also show a case where only case (ii) fails, without (i). Take the following WFSA over the Inside semiring \citep{huang-2008-advanced}:
\begin{figure}[H] 
    \centering
    \begin{tikzpicture}[node distance = 27mm]
    \node[state, initial] (q0) [] { $q_0/ 1$ }; 
    \node[state] (q1) [ right of=q0] { $q_1$ };
    \node[state, accepting] (q2) [ right of=q1] { $q_2/ 1$ };
    \draw[-{Latex[length=2.5mm]}] 
    (q0) edge[ above] node{ $\textit{a}/ 1$ } (q1) 
    (q1) edge[ loop above] node{ $\varepsilon/ \frac{1}{3}$ } (q1)
    (q1) edge[ above] node{ $\textit{b}/ 1$ } (q2) 
    ;
    \end{tikzpicture}
\label{fig:failure_fsa_2}
\end{figure}
\noindent and the same grammar as above with weight $1$ for all rules. 
It is easy to see that the language's weight for $\str=\textit{ab}$ in the WFSA is a geometric series $\lang_{\automaton} (\str)=\sum_{i=0}^{\infty}\left(\frac{1}{3}\right)^i=\frac{3}{2}$, while in the WCFG, $\lang_{\grammar}(\str)=1$. However, the  output grammar $\grammarcap$ of Construction 1 will contain one single derivation $\tree$:

\begin{figure}[H]
\centering
 \begin{forest}
[,phantom,s sep=1.5cm, 
 [$\start$ 
    [$\fsanonterm{\start}{\state_0}{\state_2}$ 
    [$\fsanonterm{\nt{A}}{\state_0}{\state_1}$
        [$\fsanonterm{a}{\state_0}{\state_1}$ [$\textit{a}$]]
        ]
    [$\fsanonterm{B}{\state_1}{\state_2}$ 
        [$\fsanonterm{\textit{b}}{\state_1}{\state_2}$
        [$\textit{b}$]]
        ]
    ]
]
]
 \end{forest}
\end{figure}
\noindent with $\weight{\tree}=1$, as all rules either stem from $\grammar$ or from the arcs $\edge{\state_0}{\textit{a}}{1}{\state_1}$ and $\edge{\state_1}{\textit{b}}{1}{\state_2}$.This will result in $L_{\grammarcap}=1$, but $L_{\automaton} \cap L_{\grammar}=\frac{3}{2}$. 
This is because there are no derivations rooted at $\start$ in $\grammarcap$ that  match with the $\varepsilon$-arcs in $\automaton$: Similarly to the example above, $\fsanonterm{\varepsilon}{\state_1}{\state_1}$ will not be reachable.
We will now briefly show how our construction fixes this failure case as well. Note that there are infinitely many paths in the WFSA with yield $\str=\textit{ab}$; but there is also only a single derivation in $\derivation_{\grammar}$ with this yield. Our construction thus ensures that there is exactly one derivation in $\derivation_{\grammarcap}$ for every \textit{ab} path in $\derivation_{\automaton}$.  As the $\varepsilon$-loop allows unboundedly long subpaths from $q_1$ to $q_2$ that are labeled with $\varepsilon^* b$, the rules generated by \cref{eqn:paired_2_BHI_automaton_epsilon} will build corresponding unboundedly deep subderivations of the following form:
\begin{figure}[H]
\centering
 \begin{forest}
[,phantom,s sep=1.5cm, 
 [$\fsaterm{\textit{b}}{q_1}{q_2}$ 
    [$\fsaterm{\varepsilon}{q_1}{q_1}$ [$\varepsilon$]]
    [$\fsaterm{\textit{b}}{q_{1}}{q_2}$
    ,edge={red!0},edge label={node[midway,black,sloped,font=\Large] {$\cdots$}}
    [$\fsaterm{\varepsilon}{q_{1}}{q_{1}}$ [$\varepsilon$]]
    [$\fsaterm{\textit{b}}{q_{1}}{q_{2}}$ [$\textit{b}$]]]
]
]
 \end{forest}
\end{figure}

\noindent Finally we observe that a similar argument holds for rules generated by \cref{eqn:paired_2_BHI_epsilon_other}, and $\varepsilon$-arcs that occur immediately before a final state.

\section{Proofs}
\label{sec:big-proof}

\subsection{Proof of \cref{thm: central theorem}}

\cref{thm: central theorem} gives a result for derivations (which are always rooted at $\start$) and paths (which always connect an initial state with a final state). However, in order to prove this theorem we must also consider subderivations and subpaths. We define subderivations as follows: a \defn{subderivation} $\subtree$ is a sequence $\valpha_0,\ldots,\valpha_{N}$ with $N \geq 0$, where (i) in the case of $N > 0$,
$\valpha_{0} = \nt{X}$, $\nt{X}\in\nonterm$, and $\valpha_N \in (\varepsilon \cup \alphabet^*)$, such that for all $0 < n \leq N$, we have $\valpha_{n-1} \overset{\arule_n}{\Rightarrow}_{L}\valpha_{n}$ for some $\arule_n \in \rules$, and (ii) in the case of $N=0$, $\valpha_0\in\alphabet \cup \{ \varepsilon\}$. 
The weight and yield of subderivations are defined analogously to that of derivations. In the extended case of $N=0$, the yield is equal to $\valpha_0$ and the weight is set to $\one$. We will say that a subderivation is \defn{rooted} at $\NT{X}$ if $\valpha_{0}=\NT{X}$.
We denote the set of subderivations rooted at $\NT{X}$ with $\derivation_{\grammar}(\NT{X})$. 
Moreover, a subpath is defined as follows: A \textbf{subpath} $\asubpath$ (of length $N\geq 0$), is (i) in the case of $N > 0$, a sequence of arcs in $\trans^*$ where the states of adjacent arcs are matched, and (ii) in the case of $N=0$ a single state $\state \in \states$.\footnote{We note the difference to paths defined in \cref{sec:wfsa}: a subpath does not need to start in an initial state and end in a final state.\looseness=-1} The subpath's weight, denoted $\tweight{\asubpath}$, is the product
$\tweight{\asubpath} =  \bigotimes_{n=1}^N \weight_n$ of the weights of the arcs along the subpath. In the extended case $N=0$ we set the weight to $\one$ and the yield to $\varepsilon$. Note that, in contrast to the weight of a path, the weight of a subpath does not account for initial and final weights. The yield is defined identically to that of paths. We denote the set of all paths starting at $\state_i$ and ending at $\state_j$ with $\derivation{\automaton}(\{\state_i,\state_j\})$. Note that the definitions of subderivation and subpath encapsulate the definitions of derivation and path respectively.
Furthermore, we will denote with $\prevq{\apath}$ and $\nextq{\apath}$, respectively, the first and the last state encountered along a path.

We will now prove two lemmas that will be necessary for the proof of \cref{thm: central theorem}.

\definecolor{mygray}{RGB}{100,100,100}

\newcommand{\proofparagraph}[1]{\vspace{5pt} \noindent\textcolor{black}{\textit{#1}}}

\begin{lemma}
\label{lemma: main lemma}
For any triplet
$\fsanonterm{X}{\state_0}{\state_m} \in \nontermcap$, with $\NT{X}\neq \negterm{\start}$
and $q_0, q_m \in \states$, there is a bijection $\psi(\subtreecap)= \langle \subtree, \asubpath \rangle $ from $\derivation_{\grammarcap}\big(\fsanonterm{X}{\state_0}{\state_m}\big)$ to the weighted  join 
$(\derivation_{\grammar}(\NT{X}) \bowtie \derivation_{\automaton} (\{ q_0 , q_m \}) )$, restricted to tuples in which the path does not have an $\varepsilon$-arc immediately before a final state. 
Moreover, it holds that:
\begin{align}        &\weight{\subtreecap} 
        =\weight{\subtree}\otimes   \tweight{\asubpath} \label{eqn: lemma 1: weight} \\        &\yield{\subtreecap}=\yield{\subtree}=\yield{\asubpath} \label{eqn: lemma 1: yield}
        \end{align}
\end{lemma}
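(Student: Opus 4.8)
The plan is to prove \cref{lemma: main lemma} by structural induction on the $\grammarcap$-subderivation $\subtreecap$, defining $\psi$ according to the rule applied at the root of $\subtreecap$ and verifying \cref{eqn: lemma 1: weight,eqn: lemma 1: yield} along the way. Only rules whose left-hand side is a triplet $\fsanonterm{X}{\state_0}{\state_m}$ with $\NT{X}\neq\negterm{\start}$ can sit at the root, which leaves four cases organized by the type of $X$. To make the later concatenation step respect the lemma's restriction, I would strengthen the induction hypothesis with the invariant that every \emph{nonempty} subpath returned by $\psi$ ends in a non-$\varepsilon$ arc. This invariant is exactly what excludes trailing $\varepsilon$-arcs (the ``$\varepsilon$-arc immediately before the ending state $\state_m$'' case); such arcs are deliberately left out of the lemma and reinserted only at the top level, via rules \labelcref{eqn:paired_2_BHI_epsilon_other,eqn:paired_2_BHI_exit_other}, in the proof of \cref{thm: central theorem}.

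For the forward direction, when $\NT{X}\in\nonterm$ the root is either \labelcref{eq:paired_2_cfg} or \labelcref{eq:paired_2_eps}. In the former, the root expands into children $\fsanonterm{\nt{X}_i}{\state_{i-1}}{\state_i}$ chaining from $\state_0$ to $\state_m$; applying the IH to each child gives pairs $\langle\subtree_i,\asubpath_i\rangle$, which I combine by prepending the rule $\wproduction{\nt{X}}{\nt{X}_1\Compactcdots\nt{X}_M}{\weight}$ to obtain $\subtree$ and by concatenating $\asubpath_1\Compactcdots\asubpath_M$ to obtain $\asubpath$---well-defined because adjacent endpoints match, and, when nonempty, ending in a non-$\varepsilon$ arc since each nonempty factor does by the strengthened IH (the RHS symbols $\nt{X}_i$ are never $\varepsilon$). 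Rule \labelcref{eq:paired_2_eps} forces $\state_0=\state_m$ and maps to the $\grammar$-subderivation $\nt{X}\Rightarrow\varepsilon$ paired with the length-$0$ subpath at $\state_0$. When $\NT{X}=a\in\alphabet$, the root is either \labelcref{eq:paired_2_fsa}, a single terminal arc paired with the trivial length-$0$ $\grammar$-subderivation $a$, or \labelcref{eqn:paired_2_BHI_automaton_epsilon}, whose left child can only be a single $\varepsilon$-arc (since \labelcref{eq:paired_2_fsa} is the sole rule producing an $\varepsilon$-triplet) and whose right child, by the IH, encodes an $\varepsilon^{*}a$ subpath; prepending the $\varepsilon$-arc preserves both the $\varepsilon^{*}a$ shape and the invariant. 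In every case \cref{eqn: lemma 1: yield} is immediate from the concatenation of yields, and \cref{eqn: lemma 1: weight} follows by gathering the rule weights and using commutativity of $\otimes$ to separate the $\grammar$-weight from the $\automaton$-weight.

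The real work is in bijectivity, which I would establish by exhibiting $\psi^{-1}$ on a pair $\langle\subtree,\asubpath\rangle$ from the restricted join and checking that $\psi\circ\psi^{-1}$ and $\psi^{-1}\circ\psi$ are identities. Given the top rule of $\subtree$, the children's nonterminals $\nt{X}_1,\ldots,\nt{X}_M$ are determined, and the only apparent freedom is how to split $\asubpath$ into $\asubpath_1\Compactcdots\asubpath_M$. The crux is that the construction's bookkeeping---each $\varepsilon$-arc is grouped with the terminal that immediately \emph{follows} it (rule \labelcref{eqn:paired_2_BHI_automaton_epsilon}) and never with a preceding symbol, while an $\varepsilon$-deriving nonterminal is forced onto a length-$0$ subpath by \labelcref{eq:paired_2_eps}---makes this split \emph{unique}; this is precisely the overcounting-avoidance of \cref{fn:nospurious}. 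I expect the main obstacle to be making this uniqueness rigorous: showing that the no-trailing-$\varepsilon$ restriction removes exactly the ambiguity about where to cut off $\varepsilon$-arcs at the right end of each factor, so that both injectivity and surjectivity reduce to the claim that there is one and only one way to decompose $\asubpath$ consistently with $\subtree$. The weight and yield identities then transfer to $\psi^{-1}$ by the same commutativity argument, completing the bijection.
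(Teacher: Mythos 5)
Your proposal is correct and follows essentially the same route as the paper's proof: structural induction on the topmost rule of $\subtreecap$, with the same four cases (\labelcref{eq:paired_2_fsa,eq:paired_2_eps,eq:paired_2_cfg,eqn:paired_2_BHI_automaton_epsilon}), the same observation that the left child of a rule \labelcref{eqn:paired_2_BHI_automaton_epsilon} can only be a single $\varepsilon$-arc, and the same commutativity-based weight and yield bookkeeping. The only difference is one of explicitness: the paper dispatches injectivity and surjectivity in two brief sketches, while your explicit inverse map with the uniqueness-of-decomposition argument (and the no-trailing-$\varepsilon$ invariant built into the induction hypothesis) spells out precisely the details the paper leaves to the reader.
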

\begin{proof}
\noindent We begin by showing that $\psi$ is well defined, that it is injective and that it satisfies the properties in  \cref{eqn: lemma 1: yield,eqn: lemma 1: weight}. We prove this by induction on subderivations.

\proofparagraph{\cref{lemma: main lemma}'s Base Case.}
We begin by observing that the only terminal rules from $\rulescap$ are defined by \cref{eq:paired_2_fsa} and \cref{eq:paired_2_eps}. 

\proofparagraph{\cref{lemma: main lemma}'s Base Case, Part \#1.}
$\subtreecap$ is obtained by the application of a single production rule  $\wproduction{\fsaterm{a}{\state_0}{\state_1}}{a}{\weight}$ from \cref{eq:paired_2_fsa}. We define $\psi(\subtreecap)= \langle \subtree, \asubpath \rangle $, where $\asubpath= \edge{\state_0}{a}{w}{\state_1}$ and $\subtree=a$ is the subderivation that contains just the string $a$ with weight $\one$. It is easy to see that the yield is preserved. Moreover:\looseness=-1
\begin{subequations}
\begin{align}
\weight{\subtreecap}&=w &\mathcomment{by \cref{eq:paired_2_fsa}} \\
&= w \otimes \one \\
&= \tweight{\asubpath} \otimes \weight{\subtree}
\end{align}
\end{subequations}

\proofparagraph{\cref{lemma: main lemma}'s Base Case, Part \#2.}
$\subtreecap$ is obtained by the application of a single production rule $\wproduction{\fsanonterm{X}{\state_0}{\state_0}}{\varepsilon}{\weight}$ from 
\cref{eq:paired_2_eps}. We construct $\psi$ as follows: $\psi(\subtreecap)= \langle \subtree, \asubpath \rangle $, where $\subtree= \NT{X} \overset{\arule}{ \Rightarrow_{L}}  \varepsilon$ with  $\arule=\wproduction{\NT{X}}{\varepsilon}{w}$, and $\asubpath$ is the subpath $\state_0$ with weight $\one$. Clearly the yield is preserved and:
\begin{subequations}
\begin{align}
    \weight{\subtreecap}&=w  &\mathcomment{by \cref{eq:paired_2_eps}}\\
    &=w \otimes \one \\
    &=\weight{\subtree}\otimes \tweight{\asubpath}
\end{align}
\end{subequations}

\proofparagraph{\cref{lemma: main lemma}'s Induction Step.}
In the induction step, we show that the properties that we have shown for the base case propagate upwards along the derivation. 
In general, we will show that for any $\subtreecap= \fsanonterm{X}{\state_0}{\state_M} \overset{\arule}{\Rightarrow_{L}} \fsanonterm{X_1}{q_0}{q_1},\ldots,\fsanonterm{X_M}{q_{M-1}}{q_M} \Rightarrow_{L} \ldots$, we can construct $\psi(\subtreecap)= \langle \subtree, \asubpath \rangle $ such that the mapping is injective and that the properties in \cref{eqn: lemma 1: weight,eqn: lemma 1: yield} hold.
Additionally, as for the base case, we will show that $\asubpath$ connects $\state_0$ with $\state_M$ and that $\subtree$ is rooted at $\NT{X}$. 
As our inductive hypothesis, we will assume that each of these hypotheses hold for the subderivations rooted at each of the child nonterminals $\fsanonterm{X_1}{q_0}{q_1},\ldots,\fsanonterm{X_M}{q_{M-1}}{q_M}$.
We note that the  rules from $\rulescap$ which apply to a nonterminal of form $\fsanonterm{X}{\state_0}{\state_M}$  with $\nt{X}\in \alphabet$ are discussed in base case \#1, if instead  $\nt{X}\in \nonterm$, we either have base case \#2 or one of the rules defined by \cref{eq:paired_2_cfg} and \cref{eqn:paired_2_BHI_automaton_epsilon}; we discuss each now.

\proofparagraph{\cref{lemma: main lemma}'s Induction Step, Part \#1.} The topmost rule applied in $\subtreecap$ is $\arule=\wproduction{\fsanonterm{a}{q_0}{q_{2}}}{\fsanonterm{\varepsilon}{q_0}{q_{1}}\fsanonterm{a}{q_{1}}{q_{2}}}{\one}$ defined by \cref{eqn:paired_2_BHI_automaton_epsilon}.
We denote with $\subtree_{\cap,1}$ the subderivation rooted at $\fsanonterm{\varepsilon}{q_0}{q_{1}}$, and we observe that the only possible form for this derivation is
$\fsanonterm{\varepsilon}{q_0}{q_{1}} \overset{\arule}{\Rightarrow_{L}} \varepsilon$ for some $\arule=\wproduction{\fsanonterm{\varepsilon}{q_0}{q_{1}}}{\varepsilon}{w}$.
We denote with $\subtree_{\cap,2}$ the subderivation rooted at $\fsanonterm{a}{q_{1}}{q_{2}}$, then by inductive hypothesis, we know that there is a mapping $\psi(\subtree_{\cap,2})=\langle \subtree_{2},\asubpath_{2} \rangle $ such that \cref{eqn: lemma 1: weight,eqn: lemma 1: yield} are satisfied.

Then we construct $\psi(\subtreecap)= \langle \subtree, \asubpath \rangle $, so that $\subtree=\subtree_{2}$ and $\asubpath= \edge{q_0}{\varepsilon}{w}{q_1} \circ \asubpath_{2}$. As the yield of the subderivation rooted at $\fsanonterm{\varepsilon}{q_0}{q_1}$ is $\varepsilon$, the yield of $\subtreecap$ is the same as that of $\subtree_{\cap, 2}$. Further, the yield of $\asubpath$ is the same as $\asubpath_2$. We thus have that:
\begin{align}
    \yield{\subtreecap} = \yield{\subtree_{\cap,2}},\qquad 
    \yield{\subtree} = \yield{\subtree_{2}},\qquad 
    \yield{\asubpath} = \yield{\asubpath_2}
\end{align}
By induction, we have that the yield is preserved.
Similarly, we have that the weight is preserved:
\begin{subequations}
 \begin{align}
 \weight{\subtreecap}&= \one \otimes \weight{\subtree_{\cap,1}} \otimes \weight{\subtree_{\cap,2}} \\
 &= \one \otimes w \otimes \weight{\subtree_{2}} \otimes \tweight{\asubpath_{2}} & \mathcomment{inductive hypothesis} \\
 &= \weight{\subtree_{2}} \otimes \Big( w \otimes \tweight{\asubpath_{2}} \Big) &\mathcomment{commutativity} \\
 &= \weight{\subtree} \otimes \tweight{\asubpath}
 \end{align}
\end{subequations}
Finally, by induction we assume that $\asubpath_{2}$ connects state $q_{1}$ with state $q_{2}$, which implies that $\asubpath$ connects state $q_0$ with state $q_{2}$.

\proofparagraph{\cref{lemma: main lemma}'s Induction Step, Part \# 2.}
The topmost rule applied in $\subtreecap$ is $\arule= \wproduction{\fsanonterm{X}{q_0}{q_M}}{\fsanonterm{X_1}{q_0}{q_{1}}, \ldots, \fsanonterm{X_M}{q_{M-1}}{\state_M} }{w}$ defined by  \cref{eq:paired_2_cfg}. By induction we assume that the subderivation $\subtree_{\cap,m}$ rooted at $\fsanonterm{X_m}{q_{m-1}}{q_m}$ is mapped by $\psi$ into a subderivation $\subtree_m$ rooted at $\NT{X_m}$ and a path $\asubpath_m$, so that $\yield{\subtree_{\cap,m}}=\yield{\subtree_m}=\yield{\asubpath_m}$ and that $\weight{\subtree_{\cap,m}}=\weight{\subtree_m}\otimes\tweight{\asubpath_m}$. We then define $\psi ( \subtreecap ) = \langle \subtree, \asubpath \rangle $ where $\subtree= \NT{X} \overset{\arule}{\Rightarrow_{L}} \NT{X}_{1}, \ldots, \NT{X_M} \Rightarrow_{L} \ldots $ with $\arule= \wproduction{\NT{X}}{\NT{X_{1}}, \ldots \NT{X_M}}{w}$ and $\asubpath= \asubpath_{1} \circ \ldots \circ \asubpath_M$. As the states of neighboring triplets are matched, and by induction we assume  that $\asubpath_m$ connects  states $q_{m-1}$ with state $q_m$, we have that $\asubpath$ is a path from $q_0$ to $q_M$. We note that the yield of $\subtree$ is obtained by concatenation of  $\yield{\subtree_m}$ from left to right, and that similarly the yield of  $\asubpath$ is obtained by concatenation of $\yield{\asubpath_m}$ from left to right. 
This, together with the inductive hypothesis proves \cref{eqn: lemma 1: yield} of the lemma---as the yield of $\subtreecap$ will also be given by the concatenation of $\yield{\subtree_{\cap,m}}$ from left to right.
We now show that \cref{eqn: lemma 1: weight} on weights holds:
\begin{subequations}
\begin{align}
    \weight{\subtreecap}&= \weight \otimes \bigotimes_{m=1}^M \weight{\subtree_{\cap, m}} \\
    &= \weight \otimes \bigotimes_{m=1}^M \weight{\subtree_m} \otimes \tweight{\asubpath_m}
    &\mathcomment{inductive hypothesis}\\
    &= \Bigg( \weight \otimes \bigotimes_{m=1}^M \weight{\subtree_m} \Bigg) \otimes \bigotimes_{m=1}^M \tweight{\asubpath_m}
    &\mathcomment{commutativity} \\
    &= \weight{\subtree} \otimes \tweight{\asubpath}
\end{align}
\end{subequations}

We have defined  $\psi$ in a bottom-up fashion. At each step changing the topmost rule would result either in a different tree $\subtree$ or in a different path $\asubpath$, which proves injectivity. The proof that $\psi$ is surjective is very similar, and consists in showing by induction, that for any $\subtree \in \derivation_{\grammar}(\NT{X})$, and for any path $\asubpath$ that does not have a sequence of $\varepsilon$-arc before a final state, it is always possible to build a derivation in $\derivation_{\grammarcap}(\fsanonterm{\NT{X}}{\prevq{\apath}}{\nextq{\apath}})$. 
We limit ourselves to noting that it is always possible to do so by using rules from \cref{eq:paired_2_cfg,eq:paired_2_fsa,eq:paired_2_eps}, as in the original Bar-Hillel construction, and by using rules defined by \cref{eqn:paired_2_BHI_automaton_epsilon} to cover $\varepsilon$-arcs in the WFSA.

\end{proof}

\begin{lemma} \label{lemma: second lema}

For any triplet $\fsanonterm{\negterm{\start}}{\qinit}{q} \in \nontermcap$, with $\qinit \in \initstates, \state \in \states $, there is a bijection $\xi(\subtreecap)= \langle \subtree, \asubpath \rangle $ 
from $\derivation_{\grammarcap}\big(\fsanonterm{\negterm{\start}}{\qinit}{q}\big)$ to the join $( \derivation_{\grammar}(\start) \bowtie \derivation_{\automaton}(\{ \qinit, q\}))$, and we have that:
\begin{align}        \weight{\subtree_{\cap}} &=
        \weight{\subtree}\otimes   \tweight{\asubpath} \label{eqn: lemma 2 weight} \\        \yield{\subtree_{\cap}}&=\yield{\subtree}=\yield{\asubpath} \label{eqn: lemma 2 yield}
\end{align}
\end{lemma}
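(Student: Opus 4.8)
The plan is to exploit the rigid shape of derivations rooted at the special symbol $\negterm{\start}$ together with \cref{lemma: main lemma} used as a black box. Since the only rules in $\rulescap$ whose left-hand side carries $\negterm{\start}$ are \cref{eqn:paired_2_BHI_epsilon_other,eqn:paired_2_BHI_exit_other}, every $\subtreecap \in \derivation_{\grammarcap}\big(\fsanonterm{\negterm{\start}}{\qinit}{q}\big)$ consists of a left spine of $k \ge 0$ applications of \cref{eqn:paired_2_BHI_epsilon_other}, each attaching on the right an $\varepsilon$-terminal triplet $\fsanonterm{\varepsilon}{r_{i-1}}{r_i}$, capped at the bottom by exactly one application of \cref{eqn:paired_2_BHI_exit_other} rewriting $\negterm{\start}$ to $\start$. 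Because $\negterm{\start}$ is a nonterminal, any complete subderivation must fire \cref{eqn:paired_2_BHI_exit_other} exactly once, so $k$ is finite and uniquely determined; and because \cref{eqn:paired_2_BHI_automaton_epsilon} emits an $\varepsilon$-terminal only as a left child (always with a genuine terminal to its right), each right child $\fsanonterm{\varepsilon}{r_{i-1}}{r_i}$ can be expanded only by \cref{eq:paired_2_fsa}, hence corresponds to exactly one automaton arc $\edge{r_{i-1}}{\varepsilon}{w_i}{r_i}$.

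I would then define $\xi$ by induction on the spine length $k$. For $k=0$ the derivation is just \cref{eqn:paired_2_BHI_exit_other} applied to a subtree $\subtreecap_0$ rooted at $\fsanonterm{\start}{\qinit}{q}$, and I set $\xi(\subtreecap) \defeq \psi(\subtreecap_0)$; this is legitimate since $\start \ne \negterm{\start}$, so \cref{lemma: main lemma} applies. For the step, the top rule is \cref{eqn:paired_2_BHI_epsilon_other}, peeling one arc $\edge{r_{k-1}}{\varepsilon}{w_k}{r_k}$; applying the inductive hypothesis to the $\fsanonterm{\negterm{\start}}{\qinit}{r_{k-1}}$-subtree gives $\langle \subtree, \asubpath'\rangle$, and I take $\subtree$ unchanged and $\asubpath \defeq \asubpath' \circ \edge{r_{k-1}}{\varepsilon}{w_k}{r_k}$, which connects $\qinit$ to $r_k = q$. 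Yield preservation is immediate because the spine contributes only $\varepsilon$. For the weight, each step multiplies $\weight{\subtreecap}$ by $\one \otimes w_k$ (the weight-$\one$ rule \cref{eqn:paired_2_BHI_epsilon_other} together with the $\varepsilon$-triplet contributing $w_k$ via \cref{eq:paired_2_fsa}) and multiplies $\tweight{\asubpath}$ by the same $w_k$, so by commutativity of $\otimes$ the identity $\weight{\subtreecap} = \weight{\subtree} \otimes \tweight{\asubpath}$ is preserved; with the base case $k=0$ supplied by \cref{lemma: main lemma}, this establishes \cref{eqn: lemma 2 weight,eqn: lemma 2 yield}.

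For the bijection, the key point is that the split of $\asubpath$ into a prefix $\asubpath_0$ and its trailing $\varepsilon$-arcs is forced. The restriction built into \cref{lemma: main lemma} guarantees that the path $\asubpath_0$ it returns never ends in an $\varepsilon$-arc, so $\asubpath_0$ must be precisely the prefix of $\asubpath$ up to and including its last genuine-terminal arc (or the length-$0$ subpath at $\qinit$ if $\asubpath$ has none), and the arcs peeled into the spine must be exactly the maximal trailing run of $\varepsilon$-arcs of $\asubpath$. Combined with the injectivity of $\psi$, this uniqueness yields injectivity of $\xi$. For surjectivity, given any $\langle \subtree, \asubpath\rangle \in \derivation_{\grammar}(\start) \bowtie \derivation_{\automaton}(\{\qinit, q\})$ I split $\asubpath$ at its maximal trailing $\varepsilon$-run into a prefix $\asubpath_0$ followed by $\edge{r_0}{\varepsilon}{w_1}{r_1} \circ \cdots \circ \edge{r_{k-1}}{\varepsilon}{w_k}{r_k}$; since $\asubpath_0$ ends in a genuine-terminal arc (or is empty) it meets the restriction of \cref{lemma: main lemma}, and since trailing $\varepsilon$-arcs carry no yield we still have $\yield{\subtree} = \yield{\asubpath_0}$, so $\psi^{-1}(\langle \subtree, \asubpath_0\rangle)$ is defined; wrapping it with one application of \cref{eqn:paired_2_BHI_exit_other} and then $k$ applications of \cref{eqn:paired_2_BHI_epsilon_other} (taking the listed $\varepsilon$-arcs as the right children) reconstructs a $\subtreecap$ with $\xi(\subtreecap) = \langle \subtree, \asubpath\rangle$.

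The main obstacle I anticipate is justifying this split point rigorously: that no trailing $\varepsilon$-arc of $\asubpath$ could have been absorbed inside the \cref{lemma: main lemma} subtree, and conversely that no earlier $\varepsilon$-arc is wrongly assigned to the spine. This rests entirely on the no-overcounting design of the construction---\cref{eqn:paired_2_BHI_automaton_epsilon} introduces an $\varepsilon$-terminal only when it is immediately followed by a real terminal---which is exactly what makes the restriction in \cref{lemma: main lemma} an exact characterization of the image of $\psi$. Once that is pinned down, the remaining verifications are routine bookkeeping mirroring the induction already used for \cref{lemma: main lemma}.
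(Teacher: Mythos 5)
Your proof takes essentially the same route as the paper's: induction on the spine of \cref{eqn:paired_2_BHI_epsilon_other} applications capped by a single application of \cref{eqn:paired_2_BHI_exit_other}, with \cref{lemma: main lemma} invoked as a black box for the base case, and identical weight and yield bookkeeping. If anything, your explicit argument that the split of $\asubpath$ into a prefix satisfying \cref{lemma: main lemma}'s restriction plus a maximal trailing run of $\varepsilon$-arcs is forced is spelled out in more detail than the paper's own brief treatment of injectivity and surjectivity.
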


\begin{proof} We now present an inductive proof (similar to the above) for this lemma.

\proofparagraph{\cref{lemma: second lema}'s Base Case.}
The topmost rule applied in $\subtreecap$ is $\wproduction{\fsanonterm{\negterm{\start}}{\qinit}{q}}{\fsanonterm{\start}{\qinit}{q}}{\one}$ from rules defined by \cref{eqn:paired_2_BHI_exit_other}.
We denote with $\subtree_{\cap,1}$ the subderivation rooted at $\fsanonterm{\start}{\qinit}{q}$. Then by \cref{lemma: main lemma}, we know that there is a mapping $\psi(\subtree_{\cap,1})=\langle \subtree_1,\asubpath_1 \rangle $ such that \cref{eqn: lemma 2 weight,eqn: lemma 2 yield} are satisfied.
We then define $\xi(\subtree_{\cap})= \langle \subtree_1, \asubpath_1 \rangle $, and one can easily see that the properties in \cref{eqn: lemma 2 weight,eqn: lemma 2 yield} are satisfied.

\proofparagraph{\cref{lemma: second lema}'s Induction Step.} 
The topmost rule applied in $\subtreecap$ is $\wproduction{\fsanonterm{\negterm{\start}}{\qinit}{q_1} }{\fsanonterm{\negterm{\start}}{\qinit}{q_0} \fsanonterm{\varepsilon}{q_0}{q_1}}{\one}$ from rules defined by \cref{eqn:paired_2_BHI_epsilon_other}.
We denote with $\subtree_{\cap,1}$ the subderivation rooted at $\fsanonterm{\negterm{\start}}{\qinit}{q_0}$, and we assume by induction that $\xi(\subtree_{\cap,1})=\langle \subtree_1,\asubpath_1 \rangle $  and that properties in \cref{eqn: lemma 2 weight,eqn: lemma 2 yield} hold.
We denote with $\subtree_{\cap,2}$ the subderivation rooted at $\fsanonterm{\varepsilon}{q_0}{q_1}$, and we observe that the only possible form for this derivation is $\fsanonterm{\varepsilon}{q_0}{q_1} \overset{\arule}{\Rightarrow_{L}} \varepsilon$ for some $\arule=\wproduction{\fsanonterm{\varepsilon}{q_0}{q_1}}{\varepsilon}{w}$.
Then we can construct $\xi(\subtree_{\cap})= \langle \subtree, \asubpath \rangle $, where $\subtree=\subtree_1$ and $\asubpath=\asubpath_1 \circ \edge{q_0}{\varepsilon}{w}{q_1}$. 
The property in \cref{eqn: lemma 2 yield} is clearly satisfied, for property \cref{eqn: lemma 2 weight}, we have:
\begin{subequations}
\begin{align}
\weight{\subtreecap} &= \one \otimes \weight{\subtree_{\cap,1}} \otimes \weight{\subtree_{\cap,2}} \\
&= \weight{\subtree_{\cap,1 }} \otimes w &\mathcomment{weight of $\subtree_{\cap,2}$}\\
&=  \weight{\subtree_{1}} \otimes \tweight{\asubpath_{1}} \otimes  w &\mathcomment{inductive hypothesis} \\
&= \weight{\subtree} \otimes \tweight{\asubpath} &\mathcomment{weight of $\asubpath$}
\end{align}
\end{subequations}
As for \cref{lemma: main lemma} we note that modifying the topmost rule in $\subtreecap$, would always result either in a different derivation $\subtree$ or in a different path $\asubpath$, which proves injectivity. Surjectivity can be shown by induction, similarly to how we did for injectivity. We will simply note that given any derivation $\subtree$ rooted at $\start$, and given any path $\asubpath$ starting from an initial state, it is always possible to build a matching derivation $\subtreecap$in $\derivation_{\grammarcap}(\fsanonterm{\negterm{\start}}{\prevq{\apath}}{\nextq{\apath}})$, by using the result from \cref{lemma: main lemma}, and applying rules defined by \cref{eqn:paired_2_BHI_exit_other,eqn:paired_2_BHI_epsilon_other}.\looseness=-1    
\end{proof}

We can finally prove \cref{thm: central theorem}, which we restate here for convenience.

\maintheorem

\begin{proof}
Any derivation $\treecap$ in $\derivation_{\grammarcap}(\start)$ takes the form 
$\start \overset{\arule}{\Rightarrow_{L}} \fsanonterm{\negterm{\start}}{\qinit}{\qfinal} \Rightarrow_{L} \ldots $ with $\arule= \wproduction{\start}{\fsanonterm{\negterm{\start}}{\qinit}{\qfinal}}{\initweight(\qinit) \otimes \finalweight(\qfinal)}$, for $\qinit \in \initstates$ and $\qfinal \in \finalstates$.
 We denote with $\subtreecap$ the subderivation rooted at $\fsanonterm{\negterm{\start}}{\qinit}{\qfinal}$.
We can thus define $\phi(\treecap) = \langle \tree, \apath \rangle = \langle \subtree, \asubpath \rangle $, where $\xi(\subtreecap)= \langle \subtree, \asubpath \rangle $, and $\xi$ is the bijection defined in \cref{lemma: second lema}. 
By \cref{lemma: second lema}  we have that $\subtree = \tree $ is rooted at $\start$, and that $\asubpath=\apath$ has initial and final states:
$\prevq{\asubpath}=\qinit$ and $\nextq{\asubpath}=\qfinal$.
Clearly, $\yield{\tree_{\cap}}=\yield{\subtree_{\cap}}$ and, by \cref{lemma: second lema}, $\yield{\subtree_{\cap}}=\yield{\subtree}=\yield{\asubpath}$.
Further, by definition $\yield{\tree}=\yield{\subtree}$ and $\yield{\asubpath}=\yield{\apath}$.
Moreover, we have that:
\begin{subequations}
\begin{align}
    \weight{\treecap}&= \weight{\arule} \otimes \weight{\subtreecap} &\mathcomment{weight of a derivation} \\
    &=\weight{\arule} \otimes \weight{\subtree}  \otimes \tweight{\asubpath} &\mathcomment{\cref{lemma: second lema}}\\
    &= \initweight(\qinit) \otimes \finalweight(\qfinal) \otimes \weight{\subtree}  \otimes \tweight{\asubpath}  
    &\mathcomment{weight of $\arule$} \\
    &= \weight{\subtree}  \otimes \initweight(\qinit) \otimes \tweight{\asubpath} \otimes \finalweight(\qfinal) 
    &\mathcomment{commutativity} \\
    &= \weight{\tree} \otimes \weight{\apath} &\mathcomment{definition of weight of a path}
\end{align}
\end{subequations}
which proves that $\phi$ is weight and yield preserving. By \cref{lemma: second lema} we know that $\xi$ is a bijection, which implies that modifying the topmost rule $\arule$ would result in a different tuple $\langle \tree,\apath \rangle$. This proves the injectivity of $\phi$. Conversely, consider any path $\apath$ connecting an initial state with a final one and  any derivation $\tree$ rooted at $\start$, such that $\yield{\tree}=\yield{\apath}$. By \cref{lemma: second lema} we know that it is always possible to construct a subderivation $\subtreecap$, rooted at $\fsanonterm{\negterm{\start}}{\qinit}{\qfinal}$, that satisfies \cref{eqn: lemma 2 yield,eqn: lemma 2 weight}. Thus we can construct $\treecap= \start \overset{\arule}{\Rightarrow_{L}} \fsanonterm{\negterm{\start}}{\qinit}{\qfinal} \Rightarrow_{L} \cdots$ with $\arule=\wproduction{\start}{\fsanonterm{\negterm{\start}}{\qinit}{\qfinal}}{\initweight(\qinit) \otimes \finalweight (\qfinal)}$ a rule from \cref{eq:paired_2_start}. This shows the surjectivity of $\phi$. 

\end{proof}

\subsection{Proof of \cref{thm:weights}}\label{sec:proof-weights}

\maincorollary
\begin{proof}
\Cref{sec:semirings} defined both $L_{\automaton}(\str)$ and $L_{\grammar}(\str)$ as sums over derivations that yield $\str$.  If there are only finitely many such derivations, then the sum is well-defined by applying the associative--commutative operator $\oplus$ finitely many times.  However, \cref{fn:infsum} noted that countably infinite sums can arise.  We treat this issue by augmenting the semiring with an operator $\bigoplus$ that is applied to a countable (possibly infinite) multiset of weights and returns a value that is interpreted as the sum of those weights, or else returns a special ``undefined'' value $\bot \notin \semiringset$ to indicate that the sum diverges.

We require $\bigoplus$ to satisfy the following axioms for any two countable multisets $I, J \subseteq \semiringset$ such that
\begin{align}
\bigoplus I = W \in \semiringset \quad \quad \bigoplus J = V \in \semiringset
\end{align}
\begin{itemize}
\item \emph{Infinite distributivity}: Let $I \bigotimes J$ denote the multiset $\multiset{i \otimes j: i \in I, j \in J}$.  Then $\bigoplus (I \bigotimes J) = W \otimes V \in \semiringset$.  
\item \emph{Infinite associativity}: for any partition\footnote{Recall that partitions are definitionally disjoint.} $I = \bigcup_{k \in  K} I_k$, we have $\bigoplus I_k \in \semiringset$ for each $k \in K$ and furthermore $\bigoplus_{k \in K} \left(\bigoplus I_k\right) = W$.
\item \emph{Base cases}: For any $w,w' \in \semiringset$, $\bigoplus \multiset{w,w'} = w \oplus w'$, $\bigoplus \multiset{w} = w$, and $\bigoplus \multiset{} = \zero$.  Together with the previous property, this ensures that $\bigoplus$ agrees with the $\oplus$-based definition on finite multisets.
\end{itemize}
The first two axioms are adapted from part of \citet{mohri_semiring}'s definition of closed semirings.  The proof of \cref{thm:weights} uses only the first axiom, as follows.
Given a string $\str$ such that $L_{\automaton}(\str), L_{\grammar}(\str) \in \semiringset$.  
By definition (\crefrange{sec:wfsa}{sec:wcfg}), $L_{\automaton}(\str) = \bigoplus I$ and $L_{\grammar}(\str) = \bigoplus J$ if we define $I = \multiset{\weight{\apath}: \apath \in \derivation{\automaton}(\str)}$ and $J = \multiset{\weight{\tree}: \tree \in \derivation{\grammar}(\str)}$.  
Then also $L_{\grammarcap}(\str) = \bigoplus (I \bigotimes J)$ since $I \bigotimes J = \multiset{\weight{\tree}: \tree \in \derivation{\grammarcap}(\str)}$ according to \cref{thm: central theorem}.  By infinite distributivity, then, $L_{\grammarcap}(\str) = (\bigoplus I) \otimes (\bigoplus J) = L_{\automaton}(\str) \otimes L_{\grammar}(\str) \in \semiringset$ as claimed.
\end{proof}

\end{document}